\DeclareMathOperator*{\expect}{\mathbb{E}}
\newtheorem{theorem}{Theorem}[section]
\newtheorem{lemma}[theorem]{Lemma}
\newcommand{\ground}{X}
\newcommand{\XX}{\mathcal{N}}
\newcommand{\A}{\mathcal{A}}
\newcommand{\B}{\mathcal{B}}
\newcommand{\D}{\mathcal{D}}
\newcommand{\mK}{\mathcal{K}}
\newcommand{\cS}{\mathcal{S}}
\newcommand{\cO}{\mathcal{O}}
\newcommand{\cR}{\mathcal{R}}
\newcommand{\Gaux}{G_{\!\A}}
\newcommand{\Haux}{H_{\!\A,\B}}
\newcommand{\subroutine}{\Call{Improve}{\Gaux}}
\newcolumntype{M}[1]{>{\centering\arraybackslash}m{#1}}
\title{Large Neighborhood Local Search for the Maximum Set Packing Problem}
\author{Maxim Sviridenko\thanks{M.I.Sviridenko@warwick.ac.uk, Work supported by EPSRC grant EP/J021814/1, FP7 Marie Curie Career Integration Grant and Royal Society Wolfson Research Merit Award. } }
\author{Justin Ward\thanks{J.D.Ward@dcs.warwick.ac.uk, Work supported by EPSRC grant EP/J021814/1. } }
\affil{Department of Computer Science, University of Warwick}
\date{}
\begin{document}
\tikzset{every node/.style={text width=1em,inner sep=0.5pt,text centered,font=\small}}
\maketitle
\begin{abstract}
In this paper we consider the classical maximum set packing problem where set cardinality is upper bounded by $k$. We show how to design a variant of a polynomial-time local search algorithm with performance guarantee $(k+2)/3$. This local search algorithm is a special case
of a more general procedure that allows to swap up to $\Theta(\log n)$ elements per iteration. We also design problem instances with locality gap $k/3$ even for a wide class of exponential time local search procedures, which can swap up to $cn$ elements for a constant $c$.  This shows that our analysis of this class of algorithms is almost tight.
\end{abstract}
\section{Introduction}
\label{sec:introduction}
In this paper, we consider the problem of maximum unweighted $k$-set packing.  In this problem, we are given a collection $\XX$ of $n$ distinct $k$-element subsets of some ground set $\ground$.
We say that two sets $A,B \in \XX$ \emph{conflict} if they share an element and call a collection of mutually non-conflicting sets from $\XX$ a \emph{packing}.  Then, the goal of the unweighted $k$-set packing problem is to find a packing $\A \subseteq \XX$ of maximum cardinality.  Here, we assume that each set has cardinality \emph{exactly} $k$.  This assumption is without loss of generality, since we can always add unique elements to each set of cardinality less than $k$ to obtain such an instance.

The maximum set packing problem is one the basic optimization problems. It received a significant amount of attention from researchers in the last few decades (see e.g. \cite{P}). It is known that a simple local search algorithm that starts with an arbitrary feasible solution and tries to add a constant number of sets to the current solution while removing a constant number of conflicting sets has performance guarantee arbitrarily close to $k/2$ \cite{HS}. It was also shown in \cite{HS} that the analysis of such an algorithm is tight, i.e. there are maximum set covering instances where the ratio between a locally optimal solution value and the globally optimal solution
value is arbitrarily close to $k/2$. 

Surprisingly, Halld\'orsson\cite{H} showed that if one increases the size of allowable swap to $\Theta(\log n)$ the performance guarantee can be shown to be at most $(k+2)/3$. Recently,  Cygan, Grandoni and Mastrolilli \cite{CGM} improved the guarantee for the same algorithm to $(k+1)/3$.  This performance guarantee is the best currently known for the maximum set packing problem. The obvious drawback of these algorithms is that it runs in time $O(n^{\log n})$ and therefore its running time not polynomial.

Both algorithms rely only on the subset of swaps of size $\Theta(\log n)$ to be able to prove their respective performance guarantees. The Halld\'orsson's swaps are particularly well structured and have a straightforward interpretation in the graph theoretic language.  In section \ref{sec:impl-search-canon} we employ techniques from fixed-parameter tractability to yield a procedure for finding well-structured improvements of size $O(\log n)$ in polynomial time.  Our algorithm is based on color coding technique introduced by Alon, Yuster, and Zwick \cite{Alon1995} and its extension by Fellows et al. \cite{Fellows2004}, and solves a dynamic program to locate an improvement if one exists.  Combining with Halld\'orsson's analysis, we obtain a polynomial time $\frac{k+2}{3}$-approximation algorithm.  In Section \ref{sec:lower-bound} we show that it is not possible to improve this result beyond $\frac{k}{3}$, even by choosing significantly larger improvements.  Specifically, we construct a family of instances in which the locality gap for a local search algorithm applying all improvements of size $t$ remains at least $\frac{k}{3}$ even when $t$ is allowed to grow linearly with $n$.  Our lower bound thus holds even for local search algorithms that are allowed to examine some exponential number of possible improvements at each stage.

\section{A Quasi-Polynomial Time Local Search Algorithm}
\label{sec:local-search-algor}
Let $\A$ be a packing.  We define an auxiliary multigraph $\Gaux$ whose vertices correspond to  sets in $\A$ and whose edges correspond to sets in $\XX \setminus \A$ that conflict with at most 2 sets in $\A$.  That is, $E(\Gaux)$ contains a separate edge $(S,T)$ for each set $X \in \XX \setminus \A$ that conflicts with exactly two sets $S$ and $T$ in $\A$, and a loop on $S$ for each set $X \in \XX$ that conflicts with exactly one set $S$ in $\A$.  In order to simplify our analysis, we additionally say that each set $X \in \A$ conflicts with itself, and place such a loop on each set of $\A$.  Note that $\Gaux$ contains $O(n)$ vertices and $O(n)$ edges, for any value of $\A$.

Our local search algorithm uses $\Gaux$ to search for improvements to the current solution $\A$.  Formally, we call a set $I$ of $t$ edges in $\Gaux$ a \emph{$t$-improvement} if $I$ covers at most $t - 1$ vertices of $\Gaux$ and the sets of $\XX \setminus \A$ corresponding to the edges in $I$ are mutually disjoint.

Note that if $I$ is a $t$-improvement for a packing $\A$, we can obtain a larger packing by removing the at most $t-1$ sets covered by $I$ from $\A$ and then adding the $t$ sets corresponding to the edges of $I$ to the result.  We limit our search for improvements in $\Gaux$ to those that exhibit the following particular form: an improvement is a \emph{canonical} improvement if it forms a connected graph containing two distinct cycles.  A general canonical improvement then comprises either 2 edge-disjoint cycles joined by a path, two edge-disjoint cycles that share a single vertex, or two distinct vertices joined by 3 edge-disjoint paths (see Figure \ref{fig:canonical}).\footnote{It can be shown that every $t$-improvement must contain a canonical $t$-improvement, and so we are not in fact restricting the search space at all by considering only canonical improvements.  However, this fact will not be necessary for our analysis.}  

Our algorithm, shown in Figure \ref{alg:1} proceeds by repeatedly calling the procedure $\subroutine$, which searches for a canonical $(4\log n + 1)$-improvement in the graph $\Gaux$.  Before searching for a canonical improvement, we first ensure that $\A$ is a maximal packing by greedily adding sets from $\XX \setminus \A$ to $\A$.  If $\subroutine$ returns an improvement $I$, then $I$ is applied to the current solution and the search continues. Otherwise, the current solution $\A$ is returned.  
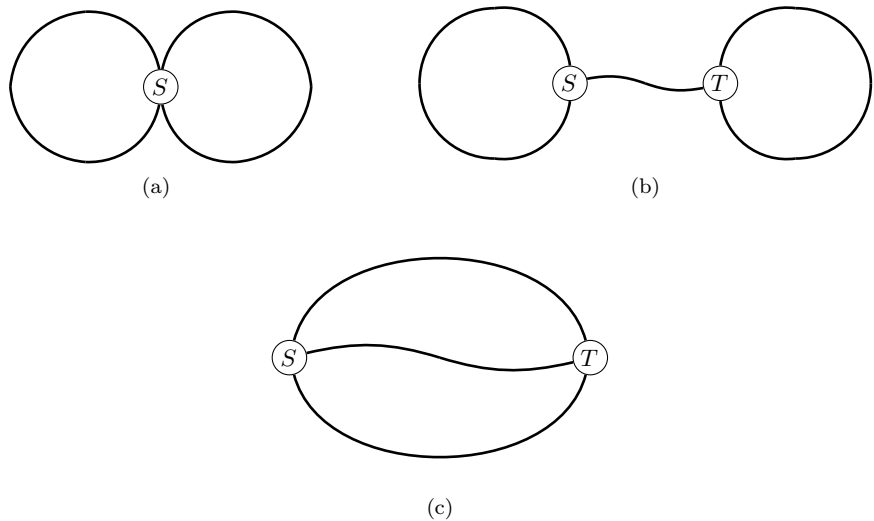
\begin{figure}[t]
\centering
\subfloat[]{
\label{fig:cannonb}
\centering
   \begin{tikzpicture}
     \node[draw,circle] (u) at (0,0) {$S$};
     \path[-,bend right=40,line width=1pt]
         (u) edge (-1,1)
         (-1,1) edge (-2,0)
         (-2,0) edge (-1,-1)
         (-1,-1) edge (u);
     \path[-,bend left=40,line width=1pt]
         (u) edge (1,1)
         (1,1) edge (2,0)
         (2,0) edge (1,-1)
         (1,-1) edge (u);
   \end{tikzpicture}
}
\hspace{3em}
\subfloat[]{
\label{fig:cannona}
\centering
   \begin{tikzpicture}
     \node[draw,circle] (u) at (0,0) {$S$};
     \node[draw,circle] (v) at (2,0) {$T$};
     \path[-,bend right=45,line width=1pt]
         (u) edge (-1,1)
         (-1,1) edge (-2,0)
         (-2,0) edge (-1,-1)
         (-1,-1) edge (u);
     \path[-,bend left=45,line width=1pt]
         (v) edge (3,1)
         (3,1) edge (4,0)
         (4,0) edge (3,-1)
         (3,-1) edge (v);
     \path[-,line width=1pt]
     (u) edge[bend left=15] (1,0)
     (1,0) edge[bend right=15] (v);
   \end{tikzpicture}
}
\newline
\subfloat[]{
\label{fig:cannonc}
\centering
\begin{tikzpicture}
     \node[draw,circle] (u) at (0,0) {$S$};
     \node[draw,circle] (v) at (4,0) {$T$};
     \path[-,line width=1pt]
     (u) edge[bend left=75] (v)
     (u) edge[bend left=15] (2,0)
     (2,0) edge[bend right = 15] (v)
     (u) edge[bend right=75] (v);
   \end{tikzpicture}
}
 \caption{Canonical Improvements}
   \label{fig:canonical}
\end{figure}

\begin{figure}
\begin{algorithmic}
\State  $\A \gets \emptyset$
\Loop
\ForAll{$S \in \XX \setminus \A$}
\If{$S$ does not conflict with any set of $\A$}
\State $\A \gets \A \cup \{ S \}$
\EndIf
\EndFor
\Statex
\State Construct the auxiliary graph $\Gaux$ for $\A$
\State $I \gets \subroutine$
\If{$I = \emptyset$}
\State \Return $\A$
\Else 
\State $\A \gets (\A \setminus V(I)) \cup E(I)$
\EndIf
\EndLoop
\end{algorithmic}
\caption{The General Local Search Procedure}
\label{alg:1}
\end{figure}

In Section \ref{sec:local-gap-algor}, we analyze the approximation performance of the local search algorithm under the assumption that $\subroutine$ always finds a canonical $(4 \log n + 1)$-improvement, whenever such an improvement exists.  In Section \ref{sec:impl-search-canon}, we provide such an implementation $\subroutine$ that runs in deterministic polynomial time.

\section{Locality Gap of the Algorithm}
\label{sec:local-gap-algor}

In this section we prove the following upper bound on the locality gap for our algorithm.   We consider an arbitrary instance $\XX$ of $k$-set packing, and let $\A$ be the packing in $\XX$ produced by our local search algorithm and $\B$ be any other packing in $\XX$.
\begin{theorem}
\label{thm:locality-gap}
$|\B| \le \frac{k+2}{3}|\A|$.
\end{theorem}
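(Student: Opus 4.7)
The plan is to adapt Halldórsson's counting argument, which bounds $|\B|$ by combining a simple element-counting inequality with local-optimality constraints on a certain auxiliary multigraph.

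Setup and arithmetic. Write $\A' = \A \setminus \B$ and $\B' = \B \setminus \A$. Since $|\A| = |\A\cap\B| + |\A'|$ and analogously for $|\B|$, and since $k \ge 1$, the theorem reduces to $|\B'| \le \frac{k+2}{3}|\A'|$. For each $i \ge 1$ let $p_i$ denote the number of sets of $\B'$ that conflict with exactly $i$ sets of $\A'$. Maximality of $\A$ (enforced at the start of every iteration of the algorithm) forces every $B \in \B'$ to conflict with at least one set of $\A$, and since $B$ cannot conflict with anything in $\A\cap\B$ (both would lie in the packing $\B$), the conflict must be with $\A'$; so $p_0 = 0$. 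Counting conflict pairs element by element using that $\A'$ and $\B'$ are packings gives $\sum_i i\, p_i \le k|\A'|$. Writing $|\B'| = p_1 + p_2 + \sum_{i \ge 3} p_i$ and bounding $p_i \le \tfrac{i\,p_i}{3}$ for $i \ge 3$ yields
\[
|\B'| \;\le\; \tfrac{1}{3}\bigl(2p_1 + p_2 + k|\A'|\bigr),
\]
so the theorem reduces to proving the structural inequality $2p_1 + p_2 \le 2|\A'|$.

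Structural inequality via $\Haux$. Form the submultigraph $\Haux \subseteq \Gaux$ whose edges are exactly the $p_1$ loops and $p_2$ non-loop edges coming from sets of $\B'$. Because the underlying sets all lie in the packing $\B$, they are pairwise disjoint, so any connected canonically-shaped subgraph of $\Haux$ with at most $4\log n + 1$ edges is a legitimate canonical improvement of $\A$, and hence ruled out by local optimality. Writing $\ell_C$, $m_C$, $v_C$ for the loops, non-loop edges, and vertices in a connected component $C$ of $\Haux$, the target inequality becomes $\sum_C (2\ell_C + m_C) \le 2\sum_C v_C \le 2|\A'|$; it therefore suffices to prove the per-component bound $2\ell_C + m_C \le 2v_C$.

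Per-component bound and main obstacle. The smallest canonical improvements, of sizes $2$ and $3$, forbid the densest local configurations all at once: two loops at one vertex (size-$2$ figure-eight), three parallel edges between two vertices (size-$3$ theta), loops at both endpoints of an edge (size-$3$ handcuff), and a loop incident to a parallel pair (size-$3$ figure-eight). This immediately handles the per-component bound on small components and on components whose cyclomatic number is at most $1$. The main obstacle is a component $C$ with cyclomatic number at least $2$ whose only canonical substructures are stretched out over more than $4\log n + 1$ edges; here local optimality gives no direct contradiction. I will handle this case by exploiting the degree bound $\Delta(\Haux) \le k$ — each $A \in \A'$ has only $k$ elements, and each element lies in at most one set of $\B'$ — together with a breadth-first search from a vertex on a shortest cycle of $C$: the ball of radius $d$ contains at most $1 + k + k^2 + \dots + k^d$ vertices, so by depth $d = O(\log_k n) \le \log n$ the BFS either exhausts $C$ (making its canonical substructure itself fit in at most $4\log n + 1$ edges) or reaches a second independent cycle in the BFS subtree, so that combining the shortest cycle, the second cycle, and a short connecting tree path produces a canonical substructure of at most $4\log n + 1$ edges. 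Either conclusion contradicts local optimality and forces the per-component bound; the delicate part of the argument is tuning the BFS so that the final constant in the canonical subgraph's edge count stays below $4\log n + 1$.
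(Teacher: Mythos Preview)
Your arithmetic reduction to the structural inequality $2p_1 + p_2 \le 2|\A'|$ is correct and is exactly what the paper does (the paper works with $\A,\B$ rather than $\A',\B'$ by declaring each set in $\A\cap\B$ to conflict with itself, but the computation is the same).  The gap is in your structural argument.

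You aim to show that every connected component $C$ of $\Haux$ with cyclomatic number at least $2$ contains a canonical improvement on at most $4\log n+1$ edges, using a BFS from a shortest cycle together with the bound $\Delta(\Haux)\le k$.  This implication is false.  Take a path $v_1v_2\cdots v_m$ and add one extra parallel edge between $v_1,v_2$ and one between $v_{m-1},v_m$.  This component has cyclomatic number $2$ and maximum degree $3\le k$, yet its \emph{only} canonical substructure is the whole graph, with $m+1$ edges.  A BFS from $v_1$ finds the first $2$-cycle immediately but does not reach the second until depth $m-2$; the max-degree bound gives only an \emph{upper} bound on ball growth, never the lower bound you need to force the second cycle to appear by depth $O(\log n)$.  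Exhausting $C$ at depth $O(\log_k n)$ likewise bounds $v_C$ by $O(n)$, not the edge count of the canonical substructure by $O(\log n)$.  So neither branch of your dichotomy yields the desired contradiction.  Note that this example does satisfy $2\ell_C+m_C\le 2v_C$, so there is no contradiction with the theorem; the point is that ``cyclomatic number $\ge 2$'' is far too weak a hypothesis for the BFS to succeed.

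What actually drives the argument is the full density hypothesis $2\ell_C+m_C>2v_C$ (equivalently, after removing loops, $m_C\ge 2(v_C-\ell_C)$), and one must exploit it before running any BFS.  The paper does this in two steps.  First (the removal lemma, Lemma~\ref{lem:induction}) it deletes every looped vertex and turns each dangling half-edge into a new loop; since an edge with two looped endpoints would already be a size-$3$ handcuff, no edge is lost, and one obtains a graph with $v_C-\ell_C$ vertices and $m_C$ edges.  Second (Berman--F\"urer, Lemma~\ref{lem:binocular}) it uses $|E|\ge 2|V|$ to iteratively strip degree-$1$ vertices and contract degree-$2$ chains---the density bound is exactly what limits each chain to length at most $1$---yielding a graph of \emph{minimum} degree $3$, on which a BFS now has exponential lower-bound growth and locates two nearby cycles within $4\log n-1$ edges.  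Pulling back through the removal lemma costs two more edges.  Your proposal is missing both the loop-removal step and the degree-$1$/degree-$2$ cleanup; without the latter there is no min-degree-$3$ lower bound, and the BFS has no teeth.
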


For the purpose of our analysis, we consider the subgraph $\Haux$ of $\Gaux$ consisting of only those edges of $\Gaux$ corresponding to sets in $\B$.  Then, every collection of edges in $\Haux$ is also present in $\Gaux$.  Moreover, because the edges of $\Haux$ all belong to the packing $\B$, any subset of them must be mutually disjoint.  Thus, we can assume that no collection of at most $4\log n + 1$ edges from $\Haux$ form any of the structures shown in Figure \ref{fig:canonical}.  Otherwise, the corresponding collection of edges in $\Gaux$ would form a canonical $(4\log n + 1)$-improvement.

In order to prove Theorem \ref{thm:locality-gap}, we make use of the following lemma of
Berman and F\"urer \cite{Berman1994}, which gives conditions under which the multigraph $\Haux$ must contain a canonical improvement.\footnote{Berman and F\"urer call structures of the form shown in Figure \ref{fig:canonical} ``binoculars.''  Here, we have rephrased their lemma in our own terminology.}  We provide Berman and F\"urer's proof in the appendix.
\begin{restatable}[Lemma 3.2 in \cite{Berman1994}]{lemma}{binocular}
\label{lem:binocular}
Assume that $|E| \ge \frac{p+1}{p}|V|$ in a multigraph $H=(V,E)$.  Then, $H$ contains a canonical improvement with at most $4p \log n - 1$ vertices.
\end{restatable}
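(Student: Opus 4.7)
The plan is a prune-and-BFS argument. First, I would iteratively delete every vertex of degree at most $1$ from $H$. Since $|E| > |V|$ holds throughout (starting from $|E|/|V| \ge (p+1)/p > 1$) and $(|E|-1)/(|V|-1) \ge |E|/|V|$ whenever $|E| \ge |V|$, the density bound $|E|/|V| \ge (p+1)/p$ is preserved and the pruned graph has minimum degree at least $2$. Any canonical improvement in the pruned graph is also one in $H$, so I may assume $H$ itself has minimum degree at least $2$; by passing to a densest connected component I may also assume $H$ is connected, since the density bound holds for the densest component with at most $n$ vertices.

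Next, pick any vertex $v_0$ and grow a BFS tree $T$ rooted at $v_0$. Let $B_d$ be the set of vertices within distance $d$ of $v_0$, with $V_d := |B_d|$ vertices and $E_d$ edges of $H$ having both endpoints in $B_d$. The tree $T$ restricted to $B_d$ is a spanning tree of the induced subgraph on $B_d$, so the number of \emph{chords} (edges of the induced subgraph not in $T$) equals $E_d - V_d + 1$. Once $B_d$ contains at least two chords $e_1, e_2$, each produces a fundamental cycle $C_i$ of length at most $2d+1$. A short case analysis on how $C_1$ and $C_2$ intersect (vertex-disjoint, sharing a single vertex, or sharing a tree sub-path) shows that $C_1 \cup C_2$, together with the tree path joining the two cycles when they are vertex-disjoint and pruned of any pendant subtrees, is a connected subgraph with cyclomatic number $2$ and matches one of the three shapes in Figure \ref{fig:canonical}. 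Its vertex count is at most the sum of the two fundamental cycle lengths plus the connecting tree path, i.e.\ $O(d)$.

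The main obstacle is the quantitative step of showing that $d \le p\log n$ already forces at least two chords in $B_d$ and that the extracted binocular has at most $4p\log n - 1$ vertices. Because the minimum degree is only $2$, BFS does not automatically expand exponentially, so a straightforward Moore-bound argument does not apply; instead, the cyclomatic condition $|E| - |V| \ge |V|/p$ must be exploited layer by layer. Either the ball $B_d$ captures a constant fraction of $|V|$ by depth $O(p\log n)$, in which case the density bound applied to the induced subgraph on $B_d$ already forces at least two chords inside, or the layer-wise growth is slow, and then a careful comparison of $E_d$ and $V_d$ across successive layers shows that the ``excess'' edges must accumulate as chords inside $B_d$ at a rate governed by $1/p$. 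Balancing these two regimes should yield the claimed vertex bound $4p\log n - 1$, with the constant $4$ accounting for the two fundamental cycles together with the connecting tree path in the worst canonical shape.
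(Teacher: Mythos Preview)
Your proposal has a genuine gap at exactly the point you flag as ``the main obstacle.'' In a minimum-degree-$2$ multigraph, BFS from an arbitrary root can be arbitrarily slow relative to $p\log n$. For a concrete counterexample to your scheme, take a cycle of length $L$ and attach to one of its vertices $u$ a new vertex $v$ joined to $u$ by $M=\lceil (L{+}1)/p\rceil$ parallel edges, with a loop on $v$. This connected multigraph has $n=L+1$ vertices, minimum degree $2$, and density at least $(p+1)/p$; but BFS started from the vertex antipodal to $u$ on the cycle needs depth about $L/2$ before it sees a single chord, and $L/2$ can be made far larger than $4p\log n$. Your dichotomy (``either the ball captures a constant fraction of $|V|$ or excess edges accumulate as chords'') does not rescue this: in the example, neither alternative kicks in until depth $\Theta(L)$. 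Passing to the densest connected component does nothing here, since the graph is already connected.

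The paper's proof avoids this by a stronger preprocessing step that you are missing: pass not to the densest component but to a \emph{minimal} induced subgraph $H'$ still satisfying $|E(H')|\ge\frac{p+1}{p}|V(H')|$. Minimality forces every maximal chain of degree-$2$ vertices in $H'$ to have length at most $p-1$ (otherwise deleting such a chain removes $\ge p$ vertices and $\ge p+1$ edges while preserving the ratio, contradicting minimality). Now suppress all degree-$2$ vertices to obtain a graph $H_3$ of minimum degree~$3$; in $H_3$ the standard BFS/Moore argument gives a cycle through the root within depth $\le 2\log n$, and contracting that cycle and repeating gives a binocular on at most $4\log n-1$ vertices of $H_3$. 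Re-expanding each suppressed chain blows each edge up by a factor of at most $p$, yielding the $4p\log n-1$ bound. The factor $p$ thus enters through the chain-length bound, not through any layer-by-layer accounting in a degree-$2$ graph.
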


It will also be necessary to bound the total number of loops in $\Haux$.  In order to do this, we shall consider a second auxiliary graph $\Haux'$ that is obtained from $\Haux$ in the following fashion:
\begin{restatable}{lemma}{removal}
\label{lem:induction}
Let $H=(V,E)$ be a multigraph and let $H'=(V',E')$ be obtained from $H$ by deleting all vertices of $H$ with loops on them and, for each edge with one endpoint incident to a deleted vertex, introducing a new loop on this edge's remaining vertex.  Let $t \ge 3$.  Then, if $H'$ contains a canonical $t$-improvement, $H$ contains a canonical $(t + 2)$-improvement.
\end{restatable}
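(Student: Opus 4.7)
The plan is to lift a canonical $t$-improvement $I'$ of $H'$ back to $H$ by undoing the loop-introducing transformation. First I would record the simple correspondence between the edge sets of the two graphs: the non-loop edges of $H'$ are exactly the edges of $H$ with both endpoints in $V'$, and every loop of $H'$ at a vertex $v$ comes from a specific edge of $H$ joining $v$ to some deleted vertex $u$; by construction, every such deleted $u$ itself carries a loop in $H$.

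Next I would bound the number of loops appearing in $I'$. A canonical $t$-improvement is connected with $t$ edges covering $t-1$ vertices, so its cycle space has dimension exactly two. Since each loop is by itself an independent cycle, $I'$ contains at most two loops; let $\ell \in \{0,1,2\}$ denote this count. The hypothesis $t \ge 3$ rules out the degenerate configuration in which a loop would sit on an isolated vertex, and in particular guarantees that every loop vertex of $I'$ is incident to at least one non-loop edge of $I'$.

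Then I would construct the candidate $I \subseteq E(H)$: keep every non-loop edge of $I'$ unchanged, and replace each loop of $I'$ at a vertex $v$ by the crossing edge $(v,u_v)$ of $H$ that produced it, together with the loop on $u_v$ in $H$. Since distinct loops at the same vertex of $H'$ come from distinct crossing edges, this lift is well defined. A direct count gives $|E(I)| = (t-\ell) + \ell + k = t+k$ and $|V(I)| = (t-1) + k$, where $k \le \ell \le 2$ is the number of distinct deleted vertices $u_v$ used. Thus $|E(I)| \le t+2$ and $|E(I)| - |V(I)| + 1 = 2$, so $I$ still has cycle space of dimension two.

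Finally I would verify that $I$ is canonical by inspecting the three cases. When $\ell = 0$ one has $I = I'$. When $\ell = 1$, or when $\ell = 2$ and the two values of $u_v$ are distinct, each lifted loop becomes a pendant edge ending at a new vertex carrying its own loop, and $I$ is of type (a) or type (b). The delicate case is $\ell = 2$ with $u_1 = u_2 = u$: the two parallel crossing edges to $u$ together with the original path between the loop vertices in $I'$ form one large cycle through $u$, which shares exactly the vertex $u$ with the loop at $u$, yielding a type (a) canonical configuration. Checking this last case, and confirming via the assumption $t \ge 3$ that no vertex of $I'$ becomes a degree-one leaf once its loop is removed, is the main obstacle; the rest is routine bookkeeping.
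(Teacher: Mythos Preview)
Your approach is essentially the same as the paper's: both lift each loop of $I'$ at $v$ to the crossing edge $(v,u)$ together with the loop at the deleted vertex $u$, and both single out the collision case $u_1=u_2$ where the two crossing edges close up into a cycle through $u$ that meets the loop at $u$ in a type-(a) configuration. The paper carries this out by explicit case analysis with a figure; your cycle-space bookkeeping ($|E(I)|-|V(I)|+1=2$, at most two loops) is a slightly slicker way to organize the same argument.

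One small correction: the claim that ``no vertex of $I'$ becomes a degree-one leaf once its loop is removed'' is false as stated (e.g.\ in a type-(b) improvement whose cycle at $S$ is a loop, $S$ has degree~$1$ after deleting that loop). Fortunately you do not need it: in the lifted configuration $I$, such a vertex $v$ regains degree~$2$ via the new crossing edge $(v,u_v)$, which is exactly why $I$ remains canonical. What the hypothesis $t\ge 3$ actually buys you is only the exclusion of the single degenerate case of two loops on one vertex with nothing else.
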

A proof of Lemma \ref{lem:induction}, based on a sketch given by Halld\'orsson \cite{H}, appears in the appendix.

We now turn to the proof of Theorem \ref{thm:locality-gap}.  Every set in $\B$ must conflict with some set in $\A$, or else $\A$ would not be maximal.  We partition the sets of $\B$ into three collections of sets, depending on how many sets in $\A$ they conflict with.  Let $\B_1$,$\B_2$, and $\B_3$ be collections of those sets of $\B$ that conflict with, respectively, exactly 1, exactly 2, and 3 or more sets in $\A$ (note that each set of $\A \cap \B$ is counted in $\B_1$, since we have adopted the convention that such sets conflict with themselves).

Because each set in $\A$ contains at most $k$ elements and the sets in $\B$ are mutually disjoint, we have the inequality
\begin{equation}
  \label{eq:2}
  |B_1| + 2|B_2| + 3|B_{3}| \le k|A|.
\end{equation}

We now bound the size of $B_1$ and $B_2$.

Let $\A_1$ be the collection of sets from $\A$ that conflict with sets of $\B_1$.  Then, note that each set of $\B_1$ corresponds to a loop in $\Haux$ and the sets of $\A_1$ correspond to the vertices on which these loops occur.  Any vertex of $\Haux$ with two loops would form an improvement of the form shown in Figure \ref{fig:cannonb}.  Thus, each vertex in $\Haux$ has at most 1 loop and hence:
\begin{equation}
  \label{eq:6}
|\B_1| = |\A_1|.
\end{equation}

Now, we show that $|\B_2| \le 2|A \setminus \A_1|$.  By way of contradiction, suppose that $|\B_2| \ge 2|\A \setminus \A_1|$.  We construct an auxiliary graph $\Haux'$ from $\Haux$ as in Lemma \ref{lem:induction}.   The number of edges in this graph is exactly $|\B_2|$ and the number of vertices is exactly $|A \setminus \A_1|$.  Thus, if $|\B_2| \le 2|A \setminus \A_1|$, then from Lemma \ref{lem:binocular} (with $p = 1$), there is a canonical improvement in $\Haux'$ of size at most $4\log n - 1$.  But, from Lemma \ref{lem:induction} this means there must be a canonical improvement in $\Haux$ of size at most $4 \log n + 1$, contradicting the local optimality of $\A$.  Thus,
\begin{equation}
  \label{eq:8}
  |B_2| < 2|\A \setminus \A_1|
\end{equation}

Adding \eqref{eq:2}, twice \eqref{eq:6}, and \eqref{eq:8}, we obtain
\[
3|B_1| + 3|B_2| + 3|B_3| \le k|A| + 2|A_1| + 2|A \setminus A_1|,
\]
which implies that $3|B| \le (k + 2)|A|$.

\section{Finding Canonical Improvements}
\label{sec:impl-search-canon}

A na\"ive implementation of the local search algorithm described in Section \ref{sec:local-search-algor} would run in only quasi-polynomial time, since at each step there are $n^{\Omega(\log n)}$ possible improvements of size $t = 4\log n + 1$.  In contrast, we now show that it is possible to find a \emph{canonical} improvement of size $t$ in polynomial time whenever one exists.  

We first give a randomized algorithm, using the color coding approach of Alon, Yuster, and Zwick \cite{Alon1995}.  If some $t$-improvement exists, our algorithm finds it with polynomially small probability.  In Section \ref{sec:determ-algor}, we show how to use this algorithm to implement a local search algorithm that succeeds with high probability, and how to obtain to obtain a deterministic variant via derandomization.

We now describe the basic, randomized color coding algorithm.  Again, consider an arbitrary instance $\XX$ of $k$-set packing and let $\ground$ be the ground set of $\XX$.  Let $K$ be a collection of $kt$ colors.  We assign each element of $\ground$ a color from $K$ uniformly at random, and assign each $k$-set from $\XX$ the set of all its elements' colors.  We say that a collection of sets $\A \subseteq \XX$ is \emph{colorful} if no color appears twice amongst the sets of $\A$.  We note that if a collection of sets $\A$ is colorful, then $\A$ must form a packing, since no two sets in $\A$ can share an element.

We assign each edge of $\Gaux$ the same set of colors as its corresponding set in $\XX$, and, similarly, say that a collection of edges colorful if the corresponding collection of sets from $\XX$ is colorful.  Now, we consider a subgraph of $\Gaux$ made up of some set of at most $t$ edges $I$.  If this graph has the one of the forms shown in Figure \ref{fig:canonical} and $I$ is colorful, then $I$ must be a canonical $t$-improvement.  We now show that, although the converse does not hold, our random coloring makes any given canonical $t$-improvement colorful with probability only polynomially small in $n$.

Consider a canonical improvement $I$ of size $1 \le i \le t$.  The $i$ sets corresponding to the edges of $I$ must be disjoint and so consist of $ki$ separate elements.  The probability that $I$ is colorful is precisely the probability that all of these $ki$ elements are assigned distinct colors. This probability can be estimated as
\begin{equation}
\label{eq:7}
  \frac{\binom{kt}{ki}(ki)!}{(kt)^{ki}} = \frac{(kt)!}{(kt - ki)!(kt)^{ki}}\ge \frac{(kt)!}{(kt)^{kt}} > e^{-kt} = e^{-4k\log n - k} = e^{-k}n^{-8k},
\end{equation}
where in the last line, we have used the fact that $e^{\log n} = e^{\ln n\log e} = n^{\log e} < n^2$.

We now show how to use this random coloring to find canonical improvements in $\Gaux$.  Our approach is based on finding colorful paths and cycles and employs dynamic programming.

We give a dynamic program that, given a coloring for edges of $\Gaux$, as described above, finds a colorful path of length at most $t$ in $\Gaux$ between each pair of vertices $S$ and $T$, if such a path exists.  For each vertex $S$ and $T$ of $\Gaux$, each value $i \le t$, and each set $C$ of $ki$ colors from $K$, we have an entry $\D(S,T,i,C)$ that records whether or not there is some colorful path of length $i$ between $S$ and $T$ whose edges are colored with precisely those colors in $C$.  In our table, we explicitly include the case that $S = T$.

We compute the entries of $\D$ bottom-up in the following fashion.  We set $\D(S,T,0,C) = 0$ for all pairs of vertices $S$, $T$, and $C$.  Then, we compute the entries $\D(S,T,i,C)$ for $i > 0$ as follows.  We set $\D(S,T,i,C) = 1$, if there is some edge $(V,T)$ incident to vertex $T$ in $\Gaux$ such that $(V,T)$ is colored with a set of $k$ distinct colors $B \subseteq C$ and the entry $\D(S,V,i-1,C \setminus B) = 1$.  Otherwise, we set $\D(S,T,iC)=0$.

To determine if $\Gaux$ contains a colorful path of given length $i \le t$ from $S$ to $T$, we simply check whether $\D(S,T,i,C) = 1$ for some set of colors $C$.  Similarly, we can use our dynamic program to find colorful \emph{cycles} of length $j$ that include some given vertex $U$ by consulting $\D(U,U,j,C)$ for each set of colors $C$.  The actual path or cycle can then be found by backtracking through the table $\D$.  We note that while the cycles and paths found by this procedure are not necessarily simple, they are edge-disjoint.

For each value of $i$, there are at most $n^2\binom{kt}{ki}$ entries in $\D(S,T,i,C)$. To compute each such entry, we examine each edge $(V,T)$ incident to $T$, check if $(V,T)$ is colored with a set of $k$ colors $B \subseteq C$ and consult $\D(S,T,i,C\setminus B)$, all which can be accomplished in time $O(nki)$.  Thus, the total time to compute $\D$ up to $i = t$ is of order:
\begin{equation*}
 \sum_{i = 1}^t n^3ki \binom{kt}{ki} \le n^4kt2^{kt}
\end{equation*}

In order to find a canonical $t$-improvement, we first compute the table $\D$ up to $i = t$.  Then, we search for improvements of each kind shown in Figure \ref{fig:canonical} by enumerating over all choices of $S$ and $T$, and looking for an appropriate collection of cycles or paths involving these vertices that use mutually disjoint sets of colors. Specifically:
\begin{itemize}
\item  To find improvements of the form shown in Figure \ref{fig:cannonb},
we enumerate over all $n$ vertices $S$.  For all disjoint sets of $ka$ and $kb$ colors $C_a$ and $C_b$ with  $a + b \le t$, we check if $\D(S,S,a,C_a) = 1$ and $\D(S,S,b,C_b) = 1$.  This can be accomplished in time
\[
n \sum_{a = 1}^t\sum_{b = 1}^{t - a}2^{ka}2^{kb}kt = O(nkt^32^{kt})
\]
\item
To find improvements of the form shown in Figure \ref{fig:cannona} we enumerate over all distinct vertices $S$ and $T$.  For all disjoint sets of $ka$, $kb$, and $kc$ colors $C_a$,$C_b$,and $C_c$ with $|C_a| + |C_b| + |C_c| \le t$, we check if $\D(S,S,a,C_a) = 1$, $\D(T,T,b,C_b) = 1$, and $\D(S,T,c,C_c) = 1$.  This can be accomplished in time
\[
n^2 \sum_{a = 1}^t\sum_{b = 1}^{t - a}\sum_{c = 1}^{t - a - b}2^{ka}2^{kb}2^{kc}kt = O(n^2kt^42^{kt})
\]
\item
To find improvements of the form shown in Figure \ref{fig:cannonc} we again enumerate over all distinct vertices $S$ and $T$.  For all disjoint sets of $ka$, $kb$, and $kc$ colors $C_a$,$C_b$,and $C_c$ with $|C_a| + |C_b| + |C_c| \le t$, we check if $\D(S,T,a,C_a) = 1$, $\D(S,T,b,C_b) = 1$, and $\D(S,T,c,C_c) = 1$.  This can be accomplished in time
\[ \
n^2 \sum_{a = 1}^t\sum_{b = 1}^{t - a}\sum_{c = 1}^{t - a - b}2^{ka}2^{kb}2^{kc}kt = O(n^2kt^42^{kt})
\]
\end{itemize}

Thus, the total time spent searching for a canonical $t$-improvement is then at most:
\begin{equation*}
  O(n^2kt2^{kt} + nkt^32^{kt} + 2n^2kt^42^{kt}) = O(n^2kt^42^{kt}) = O(2^kk\cdot n^{4k+2}\log^4n).
\end{equation*}

\section{The Deterministic, Large Neighborhood Local Search Algorithm}
\label{sec:determ-algor}

The analysis of the local search algorithm in Section \ref{sec:local-gap-algor} supposed that every call to $\subroutine$ returns $\emptyset$ only when no canonical $t$-improvement exists in $\Gaux$.  Under this assumption, the algorithm is a $\frac{k+2}{3}$-approximation.  In contrast, the dynamic programming implementation 
given in Section \ref{sec:impl-search-canon} may fail to find a canonical improvement $I$ if the chosen random coloring does not make $I$ colorful.  As we have shown in \eqref{eq:7}, this can happen with probability at most $(1 - e^{-k}n^{-8k})$.

Suppose that we implement each call to $\subroutine$ by running the algorithm of Section \ref{sec:impl-search-canon} $cN = ce^{k}n^{8k}\ln n$ times, each with a different random coloring.  We now show that the resulting algorithm is a polynomial time $\frac{k+2}{3}$-approximation with high probability $1 - n^{1-c}$.

We note that each improvement found by our local search algorithm must increase the size of the packing $\A$, and so the algorithm makes at most $n$ calls to $\subroutine$.  We set $N = e^{k}n^{8k+1}\ln n$, and then implement each such call by repeating the color coding algorithm of Section \ref{sec:impl-search-canon} $cN$ times for some $c > 1$, each with an new random coloring. The probability that any given call $\subroutine$  succeeds in finding a canonical $t$-improvement when one exists is then at least:
\[
1 - (1 - e^{-k}n^{-8k})^{cN} \ge 1 - \exp\{e^{-k}n^{-8k}\cdot ce^kn^{8k}\ln n\} = 1 - n^{-c}.
\]
And so, the probability that all calls to $\subroutine$ satisfy the assumptions of Theorem \ref{thm:locality-gap} is at least:
\[
(1 - n^{-c})^n \ge 1 - n^{1-c}
\]

The resulting algorithm is therefore a $\frac{k+2}{3}$-approximation with high probability.  It requires at most $n$ calls to $\subroutine$, each requiring total time 
\[
O(cN \cdot 2^kk n^{4k+2}\log^4n) = O(c (2e)^kk n^{12k + 2}\log^nn\ln n) = cn^{O(k)}
\]

Using the general approach described by Alon, Yuster, and Zwick \cite{Alon1995}, we can in fact give a \emph{deterministic} implementation of $\subroutine$, which \emph{always} succeeds in finding a canonical $t$-improvement in $\Gaux$ if such an improvement exists.  Rather than choosing a coloring of the ground set $\ground$ at random, we use a collection $\mK$ of colorings (each of which is given as a mapping $\ground \to K$) with the property that every canonical $t$-improvement $\Gaux$ is colorful with respect to some coloring in $\mK$.  For this, it is sufficient to find a collection of $\mK$ of colorings such that for every set of at most $kt$ elements in $\ground$, there is some coloring in $\mK$ that assigns these $kt$ elements $kt$ distinct colors from $K$.  Then, we implement $\subroutine$ by running the dynamic programming algorithm of Section \ref{sec:determ-algor} on each such coloring, and returning the first improvement found.  Because every canonical $t$-improvement contains at most $kt$ distinct elements of the ground set, every such improvement must be made colorful by some coloring in $\mK$, and so $\subroutine$ will always find a canonical $t$-improvement if one exists.

We now show how to construct the desired collection of colorings $\mK$ by using a \emph{$kt$-perfect family} of hash functions from $\ground \to K$.  Briefly, a perfect hash function for a set $S \subseteq A$ is a mapping from $A$ to $B$ that is one-to-one on $S$.  A $p$-perfect family is then collection of perfect hash functions, one for each set $S \subseteq A$ of size at most $p$.  Building on work by Fredman, Koml\'os and Szemer\'edi \cite{FKS84} and Schmidt and Siegal \cite{Schmidt1990}, Alon and Naor show (in Theorem 3 of \cite{AN96}) how to explicitly construct a perfect hash function from $[m]$ to $[p]$ for some $S \subset [m]$ of size $p$ in time $\tilde{O}(p\log m)$.  This hash function is described in $O(p + \log p \cdot \log \log m)$ bits.  The maximum size of a $p$-perfect family of such functions is therefore $2^{O(p + \log p \cdot \log \log m)}$.  Moreover, the function can be evaluated in time $O(\log m/\log p)$.

Then, we can obtain a deterministic, polynomial time $\frac{k+2}{3}$ approximation as follows.  
Upon  receiving the set packing instance $\XX$ with ground set $\ground$, we compute a $kt$-perfect family $\mK$ of hash functions from $\ground$ to a set of $kt$ colors $K$.  Then, we implement each call to $\subroutine$ as described, by enumerating over the colorings in $\mK$.  We note that since each set in $\XX$ has size $k$, $|\ground| \le |\XX|k = nk$, so each improvement makes at most 
\[
2^{O(kt + \log kt \cdot \log \log kn)} = 2^{O(k\log n + \log (k \log n) \cdot \log\log kn)} = n^{O(k)}
\]
calls to the dynamic programming algorithm of Section \ref{sec:determ-algor} (one per coloring in $\mK$) and each of these calls takes time at most $n^{O(k)}$ (including the time to evaluate the coloring  on each element of the ground set).  Moreover, the initial construction of $\mK$ takes time at most $2^{kt}\tilde{O}(kt\log kn) = n^{O(k)}$.

\section{A Lower Bound}
\label{sec:lower-bound}
We now show that our analysis is almost tight.  Specifically, we show that the locality gap of $t$-local search is least $\frac{k}{3}$, even when $t$ is allowed to grow on the order of $n$.
\begin{theorem}
\label{thm:lower-bound-sublinear}
Let $c = \frac{9}{2e^5k}$ and suppose that $t \le cn$ for all sufficiently large $n$.  There, there exist 2 pairwise disjoint collections of $k$-sets $\cS$ and $\cO$ with $|\cS| = 3n$ and $|\cO| = kn$ such that any collection of $a \le t$ sets in $\cO$ conflict with at least $a$ sets in $\cS$.
\end{theorem}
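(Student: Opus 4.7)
The plan is a probabilistic construction. Set aside a ground subset $X_0$ of $3kn$ elements, partitioned into $3n$ disjoint $k$-blocks, and let $\cS$ be this partition. Each $O \in \cO$ will consist of $k-3$ \emph{private} elements unique to $O$ (drawn from a disjoint reservoir outside $X_0$) together with three \emph{public} elements from $X_0$, with the public elements assigned by a uniformly random bijection between $X_0$ and the $3kn$ public slots $\{(i,j):\,1\le i\le kn,\,1\le j\le 3\}$. By construction, $\cS$ and $\cO$ are each pairwise-disjoint collections of $k$-sets (the bijection makes the public parts of distinct $O$'s disjoint, and the private elements are trivially disjoint) and $\cS\cap\cO=\emptyset$ since every $O$ carries a private element; moreover $O\in\cO$ conflicts with $S\in\cS$ iff some public element of $O$ lies in block $S$.

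Under this encoding, the condition in the statement becomes: every $A\subseteq\cO$ with $|A|=a\le t$ has its $3a$ public elements lying in at least $a$ distinct blocks. For fixed $A$ of size $a\ge 2$ and a fixed set $B$ of $a-1$ blocks (so $|\bigcup B|=(a-1)k$), the $3a$ public positions of $A$ form a uniformly random $3a$-subset of $X_0$, so a standard hypergeometric ratio estimate gives
\begin{equation*}
\Pr\bigl[\text{all public elements of } A \text{ lie in } \textstyle\bigcup B\bigr] \;\le\; \left(\frac{a-1}{3n}\right)^{3a}.
\end{equation*}
A union bound over $\binom{kn}{a}$ choices of $A$ and $\binom{3n}{a-1}$ choices of $B$, combined with $\binom{M}{r}\le(Me/r)^r$, then yields after routine simplification
\begin{equation*}
P_a \;\le\; \frac{a}{n}\left(\frac{e^2 k a}{9n}\right)^{\!a}.
\end{equation*}

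With $c=9/(2e^5k)$, for all $a\le t\le cn$ one computes $e^2 ka/(9n)\le e^2 kc/9 = 1/(2e^3)$, hence $P_a\le (a/n)(1/(2e^3))^a$. Summing over $a\ge 2$ gives a convergent geometric-type series times $1/n$, so $\sum_{a=2}^t P_a = O(1/n)$, which is strictly less than $1$ for all sufficiently large $n$; the $a=1$ case is automatic since every $O$ has public elements and thus conflicts with at least one block. By the probabilistic method, a successful deterministic choice of the bijection exists, completing the construction.

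The main obstacle is the bookkeeping in the union bound: the value $c=9/(2e^5k)$ is calibrated precisely so that $e^2 kc/9 \le 1/(2e^3)$ makes the geometric tail bounded. The key design decision is using exactly three public elements per $O$, so that $|\cO|/|\cS|=k/3$ and the expansion factor 3 in the public-slot count comfortably beats the required factor 1 in the Hall-type inequality; the remaining steps (random bijection, hypergeometric tail, and union bound) are routine once the parameters are set.
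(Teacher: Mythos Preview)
Your proof is correct and follows essentially the same approach as the paper: fix $\cS$ as a partition of $3kn$ elements into $k$-blocks, let the public triples of $\cO$ be a uniformly random partition of those same $3kn$ elements into $3$-sets (the paper leaves the padding to $k$-sets implicit), and apply a first-moment union bound to show that with positive probability no $a\le t$ sets of $\cO$ hit fewer than $a$ blocks. The only cosmetic difference is that you run the union bound from the $\cO$ side (fixing $A\subseteq\cO$ and a witnessing $B\subseteq\cS$ of size $a-1$) rather than from the $\cS$ side as the paper does in Lemma~\ref{lem:good-sets}, which incidentally yields the sharper base $e^2ka/(9n)$ in place of the paper's $e^5ka/(9n)$, so the stated constant $c=9/(2e^5k)$ works for you with room to spare.
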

In order to prove Theorem \ref{thm:lower-bound-sublinear} we make use of the following (random) construction.  Let $\ground$ be a ground set of $3kn$ elements, and consider a collection $\cS$ of $3n$ sets, each containing $k$ distinct elements of $\ground$.  We construct a random collection $\cR$ of $kn$ disjoint subsets of $\ground$, each containing 3 elements.

The number of distinct collections generated by this procedure is equal to the number of ways to partition the $3kn$ elements of $\ground$ into $kn$ disjoint 3-sets.  We define the quantity $\tau(m)$ to be the number of ways that $m$ elements can be divided into $m/3$ disjoint 3-sets:
\[
\tau(m) \triangleq \frac{m!}{(3!)^{m/3}(m/3)!}.
\]
In order to verify the above formula, consider the following procedure for generating a random partition. We first arrange the $m$ elements in some order and then make a 3-set from the elements at positions $3i$, $3i - 1$ and $3i-2$,  for each $i \in [\frac{m}{3}]$ (that is, we group each set of 3 consecutive elements in the ordering into a triple).  Now, we note that two permutations of the elements produce the same partition if they differ only in the ordering of the 3 elements within each of the $m/3$ triples or in the ordering of the $m/3$ triples themselves.  Thus, each partition occurs in exactly $(3!)^{m/3}(m/3)!$ of the $m!$ possible orderings.

The probability that any particular collection of $a$ disjoint 3-sets occurs in $\cR$ is given by
\[
p(a) \triangleq \frac{\tau(3kn - 3a)}{\tau(3kn)}.
\]
This is simply the number of ways to partition the remaining $3kn - 3a$ elements into $kn - a$ disjoint 3-sets, divided by the total number of possible partitions of all $3kn$ elements.

We say that a collection $\A$ of $a$ sets in $\cS$ is \emph{unstable} if there is some collection $\B$ of at least $a$ sets in $\cR$ that conflict with only those sets in $\A$.  Note that there is an improvement of size $a$ for $\cS$ only if there is some unstable collection $\A$ of size $a$ in $\cS$.\footnote{In fact, for an improvement to exist, there must be some collection $\B$ of $a+1$ such sets in $\cR$.  This stronger condition is unnecessary for our bound, however.}
We now derive an upper bound on the probability that our random construction of $\cR$ results in a given collection $\A$ in $\cS$ being unstable.
\begin{lemma}
\label{lem:good-sets}
A collection of $a$ sets in $\cS$ is unstable with probability less than $\frac{\binom{ka}{3da}\binom{kn}{da}}{\binom{3kn}{3da}}$.
\end{lemma}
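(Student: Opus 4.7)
The plan is a first-moment / union-bound argument. Fix a collection $\A$ of $a$ sets in $\cS$. Because $\cS$ consists of $3n$ disjoint $k$-sets exhausting the $3kn$-element ground set $\ground$, the union $U = \bigcup \A$ has exactly $ka$ elements, and a 3-set $X$ conflicts only with sets in $\A$ if and only if $X \subseteq U$. So $\A$ is unstable exactly when at least $a$ of the (disjoint) 3-sets in the random partition $\cR$ lie inside $U$; call this random count $X_\A$.

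I would upper-bound $\Pr[X_\A \ge a]$ by union-bounding over ``witnesses'' consisting of $da$ pairwise-disjoint 3-sets inside $U$ that all appear in $\cR$. (Taking $d$ so that $da$ is a positive integer with $da \le a$, the event that $\A$ is unstable implies the existence of such a witness.) The number of possible witnesses inside $U$ equals the number of ways to pick $3da$ elements from $U$ and then partition them into $da$ unordered triples, namely $\binom{ka}{3da}\cdot \tau(3da)$. Each fixed witness appears in $\cR$ with probability exactly $p(da) = \tau(3kn-3da)/\tau(3kn)$, by the reasoning used in the paragraph just before the lemma. Hence
\[
\Pr[\A \text{ unstable}] \;\le\; \binom{ka}{3da}\cdot \tau(3da)\cdot p(da).
\]

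The remainder is pure bookkeeping. Expanding the definition of $\tau$, the powers of $3!$ cancel between numerator and denominator, and regrouping the surviving factorials gives
\[
\tau(3da)\cdot p(da) \;=\; \frac{(3da)!\,(3kn-3da)!\,(kn)!}{(da)!\,(kn-da)!\,(3kn)!} \;=\; \frac{\binom{kn}{da}}{\binom{3kn}{3da}},
\]
which, multiplied by $\binom{ka}{3da}$, yields the stated bound $\binom{ka}{3da}\binom{kn}{da}/\binom{3kn}{3da}$. The strict inequality in the lemma comes from the union bound not being tight: an unstable $\A$ is typically witnessed by many overlapping $da$-collections simultaneously. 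I do not expect any real obstacle — the argument is entirely combinatorial, and the only care needed is to verify the witness count $\binom{ka}{3da}\tau(3da)$ and to check that the probability $p(da)$ really is the same for every fixed witness regardless of its position in $\ground$ (which follows from the symmetry of the uniform random partition).
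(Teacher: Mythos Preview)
Your proposal is correct and follows essentially the same approach as the paper's proof: union-bound over all possible witness collections of disjoint triples lying inside the $ka$ elements of $\bigcup\A$, then simplify $\binom{ka}{3da}\,\tau(3da)\,p(da)$ to the stated ratio of binomials via the same factorial cancellation. The paper in fact writes out the argument only for witnesses of size exactly $a$ (the $d$ in the displayed statement appears to be vestigial, and the subsequent application uses $d=1$), so your version---carrying $d$ through and noting that $da\le a$ ensures instability implies a $da$-witness---is, if anything, marginally more careful but structurally identical.
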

\begin{proof}
A collection $\A$ of $a$ $k$-sets from $\cS$ is unstable precisely when there is a collection $\B$ of $a$ $3$-sets in $\cR$ that contain only those $k(a-1)$ elements appearing in the sets of $\A$.  There are $\binom{ka}{3a}$ ways to choose the $3a$ elements from which we construct $\B$.  For each such choice, there are $\tau(3a)$ possible ways to partition the elements into $3$-sets, each occurring with probability $p(3a) = \tau(3kn - 3a)/\tau(3n)$.  Applying the union bound, the probability that $\A$ is unstable is then at most:
\begin{align*}
\binom{ka}{3a}\tau(3a)\frac{\tau(3kn - 3a)}{\tau(3kn)} 
&= \binom{ka}{3a}\frac{(3a)!}{(3!)^{a}a!}\cdot\frac{(3(kn - a))!}{(3!)^{kn-a}(kn - a)!}\cdot\frac{(3!)^{kn}(kn)!}{(3kn)!} 
 \\
&= \binom{ka}{3a}\frac{(3a)!(3(kn - a))!}{(3kn)!}\frac{(kn)!}{(kn - a)!a!} \\
&= \frac{\binom{ka}{3a}\binom{kn}{a}}{\binom{3kn}{3a}} \\
\end{align*}
\qedhere
\end{proof}
\begin{proof}[Theorem \ref{thm:lower-bound-sublinear}]
Let $U_a$ be number of unstable collections of size $a$ in $\cS$, and consider $\expect[U_a]$.  There are precisely $\binom{3n}{a}$ such collections, and from Lemma \ref{lem:good-sets}, each occurs with probability less than $\frac{\binom{ka}{3a}\binom{kn}{a}}{\binom{3kn}{3a}}$. Thus:
\begin{equation}
  \label{eq:1}
\expect[U_a] < \frac{\binom{3n}{a}\binom{ka}{3a}\binom{kn}{a}}{\binom{3kn}{3a}}
\end{equation}
Applying the upper and lower bounds
\begin{displaymath}
\left(\frac{n}{k}\right)^k \le \binom{n}{k} \le \left(\frac{en}{k}\right)^k,
\end{displaymath}
in the numerator and denominator, respectively, of \eqref{eq:1}, we obtain the upper bound
\[
\frac{(e3n)^{a}}{a^a}\cdot \frac{(eka)^{3a}}{(3a)^{3a}}\cdot \frac{(ekn)^a}{a^a}\cdot \frac{(3a)^{3a}}{(3kn)^{3a}} =
\left(\frac{e^53^4k^4a^6n^2}{3^6k^3a^5n^3}\right)^a = 
 \left(\frac{e^5ka}{9n}\right)^a.
\]
Then, the expected number of unstable collections in $\cS$ of size \emph{at most} $t$ (and hence the expected number of $t$-improvements for $\cS$) is less than
\begin{align}
\sum_{a = 1}^{t}\expect[U_{a}] < \sum_{a = 1}^{t}\left(\frac{e^5ka}{9n}\right)^a.
\label{eq:3}
\end{align}
For all sufficiently large $n$, we have $a \le t \le cn$ and so
\begin{equation*}
\label{eq:5}
\sum_{a = 1}^{t}\left(\frac{e^5ka}{9n}\right)^a 
\le \sum_{a = 1}^{t}\left(\frac{e^5kcn}{9n}\right)^a
= \sum_{a = 1}^{t}\left(\frac{e^5k}{9}\frac{9}{2e^5k}\right)^a
= \sum_{a = 1}^{t}\left(\frac{1}{2}\right)^a
< 1.
\end{equation*}
Thus, there must exist some collection $\cO$ in the support of $\cR$ that creates no unstable collections of size at most $t$ in $\cS$.  Then, $\cO$ is a collection of pairwise disjoint sets of size $kn$ satisfying the conditions of the theorem.
\end{proof}

\section{Conclusion}

We have given a polynomial time $\frac{k+2}{3}$ approximation algorithm for the problem of $k$-set packing.  Our algorithm is based on a simple local search algorithm, but incorporates ideas from fixed parameter tractability to search large neighborhoods efficiently, allowing us to achieve an approximation guarantee exceeding the $k/2$ bound of Hurkens and Schrijver \cite{HS}.  In contrast, our lower bound of $k/3$ shows that local search algorithms considering still larger neighborhoods, including neighborhoods of exponential size, can yield only slight improvements.

An interesting direction for future research would be to close the gap between our $k/3$ lower bound and $\frac{k+2}{3}$ upper bound.  Recently, Cygan, Grandoni, and Mastrolilli \cite{CGM} have given a quasi-polynomial time local search algorithm attaining an approximation ratio of $(k+1)/3$.   Their analysis is also based on that of Berman and F\"urer \cite{Berman1994} and Halld\'orsson\cite{H}, but their algorithm requires searching for improvements with a more general structure than those that we consider, and it is unclear how to apply similar techniques as ours in this case.  Nevertheless, we conjecture that it is possible to attain an approximation ratio of $\frac{k+1}{3}$ in polynomial time, although this will likely require more sophisticated techniques than we consider here.

In contrast to all known positive results, the best known NP-hardness result for $k$-set packing is, due to Hazan, Safra, and Schwartz \cite{HSS06}, is only $O(k/\ln k)$.  A more general open problem is whether the gap between this result and algorithmic results can be narrowed.

Finally, we ask whether our results can be generalized to the independent set problem in $(k+1)$-claw free graphs.  Most known algorithms for $k$-set packing, including those given by Halld\'orsson \cite{H} and Cygan, Grandoni, and Mastrolilli \cite{CGM} generalized trivially to this setting.  However, this does not seem to be the case for the color coding approach that we employ, as it relies on the set packing representation of problem instances.

\section*{Acknowledgements} We would like to thank Oleg Pikhurko for extremely enlightening discussion on random graphs.

{\small
 }

\appendix

\section{Appendix}
\label{sec:appendix}

Here we provide detailed proofs of the cited technical results.

\binocular*

\begin{proof}
Suppose that $H'$ is the smallest induced subgraph of $H$ that satisfies the condition 
\begin{equation}
\label{eq:11}
E(H') \ge \frac{p}{p+1}V(H').
\end{equation}
We shall show that $H'$ must contain a canonical improvement.  First, we note that $H'$ cannot contain any degree 1 vertices.  Otherwise, we could remove all such vertices to obtain a smaller graph satisfying~\eqref{eq:11}.  Moreover, any chain of degree 2 vertices in $H'$ has fewer than $p$ vertices.  Otherwise, we could remove this chain of $p$ vertices, together with the $p+1$ edges incident on them to obtain a smaller graph satisfying~\eqref{eq:11}.  We replace every chain of degree 2 vertices in $H'$ with a single edge connecting its the endpoints to obtain a graph $H_3$ with minimum degree 3.

Let $I_3$ be a minimal connected subgraph of $H_3$ with exactly 2 distinct cycles.  Then,  $I_3$ is a canonical improvement in $H_3$ and $|V(I_3)| \le |E(I_3)| + 1$.  By expanding each contracted chains of vertices in $I_3$, we obtain a connected subgraph of $H'$ that contains 2 distinct cycles.Then, $|V(I')| \le p(|V(I_3)| + 1)$.  Thus, to complete the proof of Lemma \ref{lem:binocular} it suffices to show that $H_3$ must contain a connected subgraph $I_3$ with exactly 2 distinct cycles and $|V(I_3)| \le 4\log n - 1$.

Let $n = |V|$ and note that $|V(H_3)| \le |V(H')| \le V(H) \le n$.  We first note that $H_3$ has maximum girth less than $2\log n$.  To prove this, simply construct a breadth first search tree rooted at some vertex in the graph.  There must be some vertex $v$ of distance less than $\log n$ from the root without 2 children.  Since $v$ has degree 3, it must have an edge to some previously visited vertex $u$ in the tree (where possibly $u = v$).  The paths from $u$ and $v$ to the root, together with the edge $(u,v)$ form a connected subgraph $C$ that has at most $2\log n$ vertices and contains both the root of the tree and a cycle.  Let $C$ be a minimal such subgraph.  We contract $C$ to a single vertex and call the resulting graph $H_3'$.  Then, $H_3'$ must also minimum degree 3 and at most $n$ vertices.  Repeating the argument we can find a minimal subgraph $C'$ in $H_3'$ with at most $2\log n$ vertices that contains both the root of $H_3'$ and a cycle.  Let $I_3$ be the induced subgraph of $H_3$ containing the vertices of $C$ and $C'$.  Then, $I_3$ has at most $4\log n - 1$ vertices, and is a connected subgraph of $H_3$ containing exactly 2 distinct cycles.
\end{proof}

\removal*

\begin{proof}
Our argument is based on a sketch given by Halld\'orsson \cite{H}.  In the interest of completeness, we present a more detailed argument here.

Note that any canonical $t$ improvement in $H'$ that does not contain a loop is also present in $H$.   Consider, then, a canonical $t$-improvement $I$ in $H'$ which contains either one or two loops (i.e. an improvement of the form \ref{fig:cannonb} and \ref{fig:cannona}, where one or both of the cycles are loops).  A loop on vertex $v$ in $H'$ corresponds to an edge $(u,v)$ in $H$, where $v$ has a loop.  The improvement $I$ in $H'$ must have 2 cycles joined by either a path or a single vertex.  Figure \ref{fig:secondary-reduction} illustrates all of the possible configurations for $I$ in $H'$ and the related canonical improvements in $H$, which we now show must exist.

If exactly one of these cycles is a loop on some vertex $v$, then we must have a path (possibly of length 0) joining $v$ to another cycle $J$ in $H'$.  This path and $J$ are also present in $H$.  Additionally, in $H$ we must have an edge $(u,v)$ and a loop on $u$.  Thus, the loop on $u$, together with the edge $u,v$, the cycle $J$ and the path connecting $v$ to $J$ form a canonical improvement with only one more edge than $I$.

Now, suppose that both of these cycles are loops on some vertices $v_1$ and $v_2$ (where possibly $v_1 = v_2$).  If the corresponding edges $(v_1,u_1)$ and $v_2,u_2$ in $H$ have distinct endpoints $u_1 \neq u_2$, then the two loops on $u_1$ and $u_2$ together with these edges and the path (of length 0, in the case that $v_1 = v_2$) joining $v_1$ and $v_2$ form a canonical improvement.  If $u_1 = u_2$, then the edges $(v_1,u_1)$ and $(v_2,u_2)$, together with the path (again, of length 0 if $v_1 = v_2$) from $v_1$ to $v_2$ forms a cycle.  This, together with the loop on the vertex $u_1 = u_2$, forms a canonical improvement.  In both cases, the canonical improvement in $H$ has only 2 more edges than $I$.
\end{proof}

\begin{figure}
\tikzset{every loop/.style={min distance=5mm,looseness=10}}
\centering
\begin{tabular}{M{0.4\linewidth}M{0.4\linewidth}}
\toprule
Improvement in $H'$
&
Corresponding Improvement in $H$
\\ \midrule
   \begin{tikzpicture}
     \node[draw,circle] (u) at (0,0) {$S$};
     \path[-,bend right=45,line width=1pt]
         (u) edge (-1,1)
         (-1,1) edge (-2,0)
         (-2,0) edge (-1,-1)
         (-1,-1) edge (u);
     \path[-]
         (u) edge[loop right] (u);
   \end{tikzpicture}
&
   \begin{tikzpicture}
     \node[draw,circle] (u) at (0,0) {$S$};
     \node[draw,circle] (z) at (1,0) {$X$};
     \path[-,bend right=45,line width=1pt]
         (u) edge (-1,1)
         (-1,1) edge (-2,0)
         (-2,0) edge (-1,-1)
         (-1,-1) edge (u);
     \path[-]
         (z) edge[loop right] (z);
     \path[-]
     (u) edge (z);
   \end{tikzpicture}
\\   \midrule
   \begin{tikzpicture}
     \node[draw,circle] (u) at (0,0) {$S$};
     \path[-]
         (u) edge[loop left] (u)
         (u) edge[loop right] (u);
   \end{tikzpicture}
&
   \begin{tikzpicture}
     \node[draw,circle] (u) at (0,0) {$S$};
     \node[draw,circle] (z1) at (-1,0) {$X$};
     \node[draw,circle] (z2) at (1,0) {$Y$};
     \path[-]
         (z1) edge[loop left] (z1)
         (z2) edge[loop right] (z2);
     \path[-]
         (z1) edge (u)
         (u) edge (z2);
     \node[draw,circle] (u) at (0,-1.5) {$S$};
     \node[draw,circle] (z1) at (0,-3) {$X$};
     \path[-]
         (z1) edge[loop below] (z1);
     \path[-]
         (z1) edge[bend right=30] (u)
         (z1) edge[bend left=30] (u);
       \end{tikzpicture}
\\
\midrule
   \begin{tikzpicture}
     \node[draw,circle] (u) at (0,0) {$S$};
     \node[draw,circle] (v) at (2,0) {$T$};
     \path[-,bend right=45,line width=1pt]
         (u) edge (-1,1)
         (-1,1) edge (-2,0)
         (-2,0) edge (-1,-1)
         (-1,-1) edge (u);
     \path[-]
         (v) edge[loop right] (v);
     \path[-,line width=1pt]
     (u) edge[bend left=15] (1,0)
     (1,0) edge[bend right=15] (v);
   \end{tikzpicture}
&
   \begin{tikzpicture}
     \node[draw,circle] (u) at (0,0) {$S$};
     \node[draw,circle] (v) at (2,0) {$T$};
     \node[draw,circle] (z) at (3,0) {$X$};
     \path[-,bend right=45,line width=1pt]
         (u) edge (-1,1)
         (-1,1) edge (-2,0)
         (-2,0) edge (-1,-1)
         (-1,-1) edge (u);
     \path[-]
         (z) edge[loop right] (z);
     \path[-,line width=1pt]
     (u) edge[bend left=15] (1,0)
     (1,0) edge[bend right=15] (v);
    \path[-]
    (v) edge (z);
   \end{tikzpicture}
\\ \midrule
   \begin{tikzpicture}
     \node[draw,circle] (u) at (0,0) {$S$};
     \node[draw,circle] (v) at (2,0) {$T$};
     \path[-]
         (u) edge[loop left] (u)
         (v) edge[loop right] (v);
     \path[-,,line width=1pt]
     (u) edge[bend left=15] (1,0)
     (1,0) edge[bend right=15] (v);
   \end{tikzpicture}
 &
 \begin{tikzpicture}
     \node[draw,circle] (u) at (0,0) {$S$};
     \node[draw,circle] (v) at (2,0) {$T$};
     \node[draw,circle] (z1) at (-1,0) {$X$};
     \node[draw,circle] (z2) at (3,0) {$Y$};
     \path[-]
         (z1) edge[loop left] (z1)
         (z2) edge[loop right] (z2)
         (z1) edge (u)
         (v) edge (z2);
     \path[-,line width=1pt]
         (u) edge[bend left=15] (1,0)
         (1,0) edge[bend right=15] (v);
     \node[draw,circle] (u) at (0,-1.5) {$S$};
     \node[draw,circle] (v) at (2,-1.5) {$T$};
     \node[draw,circle] (z1) at (1,-2.5) {$X$};
     \path[-]
         (z1) edge[loop below] (z1)
         (z1) edge (u)
         (v) edge (z1);
     \path[-,line width=1pt]
         (u) edge[bend left=15] (1,-1.5)
         (1,-1.5) edge[bend right=15] (v);
       \end{tikzpicture}
\\ \bottomrule
 \end{tabular}
   \caption{Canonical improvements in $H'$ and corresponding improvements in $H$.}
 \label{fig:secondary-reduction}
\end{figure}
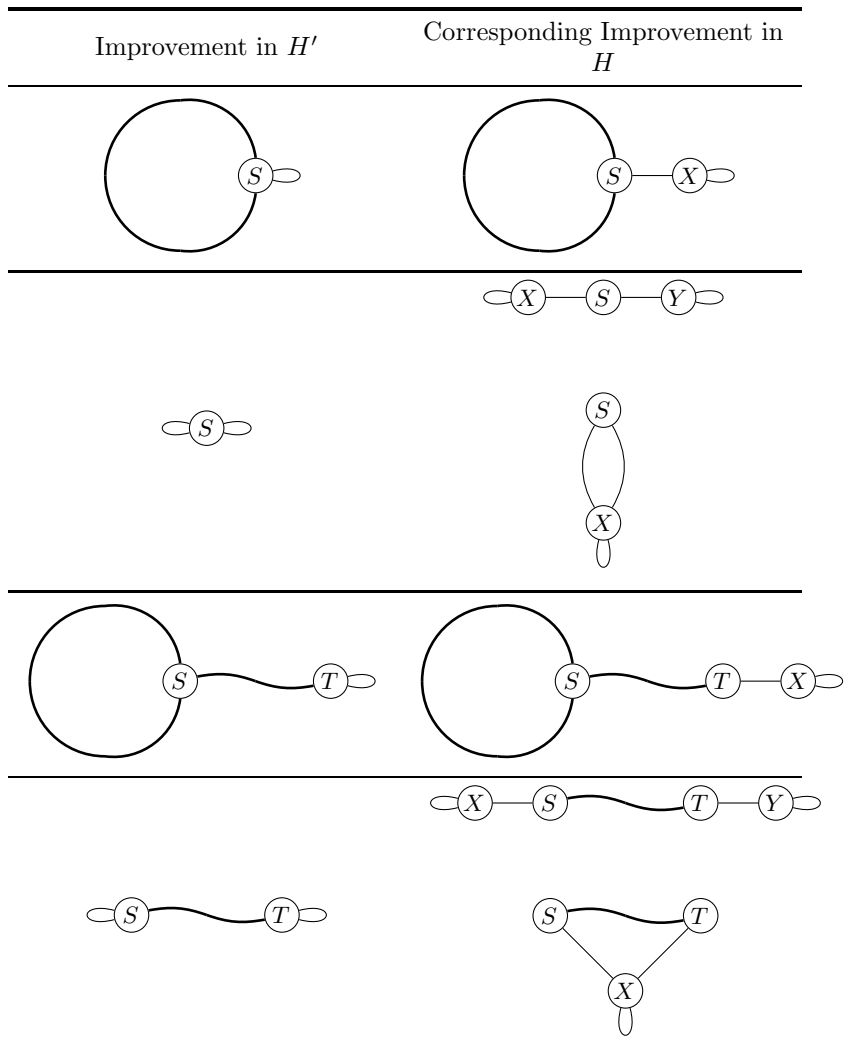
\end{document}